\newcommand{\Vast}{\bBigg@{5}}
\newcommand{\vast}{\bBigg@{3}}
\newcommand{\km}{\,\mathrm{km}}
\newcommand{\kmkm}{\,\mathrm{km^2}}
\newcommand{\W}{\,\mathrm{W}}
\newcommand{\dd}{\,\mathrm{d}}
\newcommand{\dB}{\,\mathrm{dB}}
\newcommand{\maxi}{\,\mathrm{maximize}}
\newcommand{\BSkm}{\,\mathrm{BSs/km^2}}
\newcommand{\Userkm}{\,\mathrm{users/km^2}}
\newcommand{\opt}{\,\mathrm{opt}}
\newcommand{\s}{\,\mathrm{s}}
\theoremstyle{definition}
\newtheorem{theo}{\textbf{Theorem}}
\newtheorem{cor}{\textbf{Corollary}}
\newtheorem{rem}{\textit{Remark}}
\providecommand\add@text{}
\newcommand\tagaddtext[1]{%
  \gdef\add@text{#1\gdef\add@text{}}}%
\renewcommand\tagform@[1]{%
  \maketag@@@{\llap{\add@text\quad}(\ignorespaces#1\unskip\@@italiccorr)}%
}
\begin{document}

\title{Traffic Offloading Probability for Integrated LEO Satellite-Terrestrial Networks}

\author{ Hossein~Akhlaghpasand, and~Vahid~Shah-Mansouri
\thanks{Hossein Akhlaghpasand and Vahid Shah-Mansouri are with the School of Electrical and Computer Engineering, University of Tehran, Tehran, Iran (e-mail: hakhlaghpasand@ut.ac.ir, vmansouri@ut.ac.ir).}}

\maketitle

\begin{abstract}
In this paper, we consider traffic offloading of integrated low earth orbit (LEO) satellite-terrestrial network. We first derive traffic offloading probability from the terrestrial network to the LEO satellite network based on the instantaneous radio signal strength. Then to overcome limited coverage and also traffic congestion of the terrestrial network, we design an optimal satellite network in which the minimum number of LEO satellites maximizes the traffic offloading probability, while probability of a generic LEO satellite being idle is close to zero. Since the satellite network is optimized regarding the intensities of the base stations (BSs) and the users, it can control the terrestrial traffic. Numerical results show that an appropriate number of LEO satellites overcomes the limited coverage in a region with low intensity of the BSs and also the traffic congestion by controlling the traffic in a region that the intensity of the users increases.
\end{abstract}

\begin{IEEEkeywords}
Traffic offloading, LEO-based satellite communications, integrated satellite-terrestrial network.
\end{IEEEkeywords}

\section{Introduction}
To expand a wide range of user services in the next generation of the wireless systems, LEO satellite network has received significant attention by providing extensive communications all over the Earth \cite{r1}-\cite{r3}. Recently, data traffic of the terrestrial network considerably increases. Traffic offloading from the terrestrial network to the satellite network becomes a solution to manage the increasing data traffic \cite{r4}.

Traffic offloading protocols in \cite{r5} are considered for the integration of satellite networks into existing terrestrial network infrastructures, as well as in \cite{r6} the authors adapt the offloading problem to the predicted traffic. In \cite{r7}, the traffic offloading and the energy efficiency are optimized  through cooperation between the satellite and the terrestrial networks. Du \emph{et al.} in \cite{r8} design a second-price auction mechanism in which the terrestrial network sells the channel to the beam groups of the satellite. The authors in \cite{r9} suggest a new mechanism based on Stackelberg game for pricing data offloading between the satellite and the terrestrial networks. In \cite{r10}, low latency data is transferred through the terrestrial network, while broadband data is transferred through the satellite network. The rate of each terrestrial cell is maximized in \cite{r11} such that the power consumption of the satellite in the backhaul link is low.

Our motivation is to propose a framework based on the traffic offloading probability for design of a LEO satellite network in order to reinforce the terrestrial network by overcoming the limited coverage and also the traffic congestion. The main contributions of this paper can be summarized as follows:
\begin{itemize}
  \item We introduce an integrated satellite-terrestrial network in which each user may be serviced by either a terrestrial network with Poisson point process (PPP) distribution for the BSs, or a satellite network with Binomial point process (BPP) distribution for the LEO satellites.
  \item We derive traffic offloading probability from the terrestrial network to the satellite network based on the instantaneous radio signal strength in which the satellite channel is modeled with respect to its high-speed movement. We also compute probability of a generic LEO satellite being idle which is called \emph{satellite-empty probability}. For BPP model, these probabilities have not been considered in prior works.
  \item We suggest a mathematical problem for design of an optimal satellite network to control the terrestrial traffic, in which the minimum number of LEO satellites (as a control parameter for reinforcing the terrestrial network) maximizes the traffic offloading probability such that the satellite-empty probability is close to zero.
\end{itemize}
Numerical results demonstrate that an appropriate number of LEO satellites reinforces the terrestrial network. Our experiments also show how important parameters such as the altitude of the satellite network and the radiant power of the satellites affect the traffic offloading probability. 

\section{Problem Setup}
We consider an integrated satellite-terrestrial network consisting of $N$ LEO satellites distributed as BPP on a spherical surface around the center of the Earth and at the altitude of ${r}_s$ above the Earth; and multiple BSs distributed as PPP with intensity $B$ in a certain area on the ground. Since PPP can fit the ground network well in performance analysis, we exploit it for distribution of the BSs. But, PPP is not suitable for modeling a finite area network with limited nodes, e. g., the LEO satellite network. For such cases, BPP is an appropriate model to capture the characteristics of the network \cite{r12}. The radius of the Earth is denoted by ${r}_e$. Since the terrestrial and the satellite networks are independent, the overall network can be considered as a superposition of two independent Voronoi tessellations, one for the LEO satellites and the other for the BSs. According to the instantaneous radio signal strength, each user may be serviced by either the nearest BS at the terrestrial network or the nearest LEO satellite at the satellite network. It is worthwhile noting that the BSs are not uniformly distributed all over the Earth. So, we denote $B$ for a certain region that the intensity is almost constant. Similarly the users are spatially distributed in a certain region according to an independent PPP with intensity $\mathcal{U}$. We also consider a general power loss model in which the power of the signal decreases as $r^{-\eta}$ with the distance $r$, where $\eta$ is the path loss exponent.

The channel between the test user and LEO satellite is denoted by ${h}_s$ and modeled as Rice-Rayleigh/lognormal fading in which the Rician fading represents unshadowed areas with high received power (good state) and Rayleigh/lognormal fading represents shadowed areas with low received power (bad state) \cite{r13}. By taking into account the high-speed movement of the LEO satellites, the probability distribution function (pdf) of the Rice-Rayleigh/lognormal fading power is given by
\begin{align}\label{eq1}
{f_{\left| h_s \right|^2}}\left(h;t\right) = \left(1-{P_f}\left(t\right)\right){K}\left(t\right) e^{-{K}\left(t\right)\Big(h+1\Big)}{I_0}\left(2{K}\left(t\right)\sqrt{h}\right) \nonumber \\
+ \frac{10{P_f}\left(t\right)}{\sqrt{2\pi}{\varsigma}\left(t\right) {\ln} \left(10\right)} \int_{0}^{+\infty} \frac{1}{h_0^2}e^{-\left(\frac{h}{h_0}+\frac{\left(10\log_{10}\left(h_0\right)-{\mu}\left(t\right)\right)^2}{2{\varsigma^2}\left(t\right)}\right)} \dd h_0 ,
\end{align}
for $h \geq 0$, where ${P_f}$, ${K}$, ${\mu}$ and ${\varsigma}$ depend on the elevation angle of the LEO satellite and respectively represent the bad state probability of the channel, the Rice factor, the mean power level decrease and the variance of the power level due to shadowing. ${I_0}\left(\cdot\right)$ is also the modified Bessel function of order zero. When a LEO satellite passes from horizon to horizon, the rapid change of the elevation angle is distinguished according to the time, $t$. Hence, the elevation-dependent parameters can be expressed as the functions of $t$. Moreover, due to BPP model of the satellite network, the pdf of the user's distance from the nearest LEO satellite, $R_s$, is \cite{r12}
\begin{equation}\label{eq2}
{f_{R_s}}\left(r\right) = N \left(1-\frac{r^2-{r}_s^2}{4{r}_e \left({r}_e+{r}_s\right)}\right)^{N-1} \frac{r}{2{r}_e \left({r}_e+{r}_s\right)} , \quad {r}_s \leq r \leq 2{r}_e+{r}_s .
\end{equation}
The channel between the test user and BS is denoted by $h_b$ which is modeled as Rayleigh fading. The cumulative distribution function (CDF) of the Rayleigh fading power is given by
\begin{equation}\label{eq3}
{F_{\left|h_b\right|^2}}\left(h\right) \triangleq \Pr \left(\left|h_b\right|^2 \leq h\right) = 1-e^{-\frac{h}{2\sigma^2}} , \quad h \geq 0 ,
\end{equation}
where $\sigma$ is the parameter of the Rayleigh fading. Finally, due to PPP model of the terrestrial network, the CDF of the user's distance from the nearest BS, $R_b$, is obtained as
\begin{equation}\label{eq4}
{F_{R_b}}\left(r\right) = 1-e^{-B \pi r^2} , \quad r \geq 0 .
\end{equation}

\section{Traffic Offloading Probability}
In this section, we compute traffic offloading probability of the test user from the terrestrial network to the satellite network. According to Slivnyak's theorem, the analysis proposed here for the test user holds for any generic user.
\begin{theo}\label{theo1}
In the integrated satellite-terrestrial network with Rice-Rayleigh/lognormal fading for the satellite links and Rayleigh fading for the terrestrial links, by equality of the path loss exponents for two links, traffic offloading probability of the test user from the nearest BS to the nearest satellite is obtained as
\begin{align}\label{eq5} 
{P}_s = 2\sigma^2 N \int_{0}^{+\infty} \frac{e^{-\left(\pi B\left(\frac{{\mathcal{P}}_s}{\mathcal{P}_b}h\right)^{-\frac{2}{\eta}}{r}_s^2+{\mathcal{D}}\left(h\right)\right)}{\gamma}\left(N,-{\mathcal{D}}\left(h\right)\right)}{\left(-{\mathcal{D}}\left(h\right)\right)^N} \Bigg[{P_f}\left(t\right){\mathcal{Q}}\left(h;t\right) \nonumber \\
+\left(1-{P_f}\left(t\right)\right) e^{-{K}\left(t\right)}\sum_{z=1}^{+\infty}\frac{z {K^{2z-1}}\left(t\right)\left(2\sigma^2 h \right)^{z-1}}{\left(z-1\right) ! \left(2\sigma^2 h{K}\left(t\right)+1\right)^{z+1}}\Bigg] \dd h ,
\end{align}
where ${\mathcal{D}}\left(h\right)=4\pi B \left({\mathcal{P}}_sh/\mathcal{P}_b\right)^{-2/\eta}{r}_e\left({r}_e+{r}_s\right)$ and
\begin{equation}
{\mathcal{Q}}\left(h;t\right)=\frac{10}{\sqrt{2\pi}{\varsigma}\left(t\right){\ln}\left(10\right)}\int_{0}^{+\infty}\frac{e^{-\left(10\log_{10}\left(h_0\right)-{\mu}\left(t\right)\right)^2/2{\varsigma^2}\left(t\right)}}{\left(h_0+2\sigma^2 h\right)^2} \dd h_0 .
\end{equation}
${\mathcal{P}}_s$ and $\mathcal{P}_b$ respectively represent the radiant powers of the satellites and the BSs. ${\gamma }\left(\cdot,\cdot\right)$ is the lower incomplete gamma function.
\end{theo}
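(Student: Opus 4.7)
The plan is to formalize the offloading event as $\{\mathcal{P}_s |h_s|^2 R_s^{-\eta} > \mathcal{P}_b |h_b|^2 R_b^{-\eta}\}$, equivalently $\{R_b > (\mathcal{P}_b |h_b|^2/(\mathcal{P}_s |h_s|^2))^{1/\eta} R_s\}$, and peel off the four random variables in succession. Conditioning on $(|h_s|^2, |h_b|^2, R_s)$, the nearest-BS void probability $\Pr(R_b > x) = e^{-B\pi x^2}$, which is immediate from (4), reduces $P_s$ to the expectation of $\exp(-B\pi (\mathcal{P}_b |h_b|^2/(\mathcal{P}_s |h_s|^2))^{2/\eta} R_s^2)$ over $(|h_s|^2, |h_b|^2, R_s)$.

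The $R_s$-integral is the first nontrivial step. I would integrate against the BPP density (2) via the substitution $u = 1 - (r^2 - r_s^2)/(4r_e(r_e+r_s))$, which sends $r \in [r_s, 2r_e + r_s]$ to $u \in [0,1]$ and rewrites $f_{R_s}(r)\,dr$ as $-N u^{N-1}\,du$, while the Gaussian-in-$r$ factor becomes $\exp(-\tilde a r_s^2 - D(1-u))$ with $\tilde a = \pi B (\mathcal{P}_b |h_b|^2/(\mathcal{P}_s |h_s|^2))^{2/\eta}$ and $D = 4\tilde a r_e(r_e+r_s)$. The surviving $\int_0^1 u^{N-1} e^{Du}\,du$ turns, after the change of variable $v = -Du$, into $\gamma(N,-D)/(-D)^N$. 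Re-parameterizing with $h \triangleq |h_s|^2/|h_b|^2$ identifies $D$ with $\mathcal{D}(h)$ and $\tilde a r_s^2$ with $\pi B (\mathcal{P}_s h/\mathcal{P}_b)^{-2/\eta} r_s^2$, producing precisely the first factor of the integrand in (5).

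Finally, to average over the two fading powers, I would change variables from $(|h_s|^2, |h_b|^2)$ to $(h,g) = (|h_s|^2/|h_b|^2, |h_b|^2)$ with Jacobian $g$ and integrate out $g$, leaving the bracket in (5) as a function of $h$ alone. For the Rayleigh/lognormal branch of (1), the $g$-integral $\int_0^\infty g\,h_0^{-2} e^{-g(h/h_0 + 1/(2\sigma^2))}/(2\sigma^2)\,dg$ evaluates in closed form to $2\sigma^2/(h_0 + 2\sigma^2 h)^2$, which inserted back reproduces $2\sigma^2 P_f(t)\mathcal{Q}(h;t)$. For the Rice branch I would substitute the Taylor series $I_0(2K\sqrt{hg}) = \sum_{k\ge 0} K^{2k}(hg)^k/(k!)^2$, exchange sum and integral using $\int_0^\infty g^{k+1} e^{-g(Kh + 1/(2\sigma^2))}\,dg = (k+1)!/(Kh + 1/(2\sigma^2))^{k+2}$, and reindex $z = k+1$ to obtain the displayed series with prefactor $2\sigma^2(1-P_f(t))e^{-K(t)}$.

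The one step I expect to be delicate is the $R_s$-integral: one must recognize that the compact-support BPP nearest-neighbor density, after the quadratic substitution, produces a lower incomplete gamma with \emph{negative} argument $\gamma(N,-\mathcal{D}(h))$, rather than the upper incomplete gamma that usually appears in PPP-based nearest-neighbor analyses. Once that identity is isolated, the subsequent fading averages reduce to a standard exponential moment for the shadowing branch and a power-series interchange for the Rice branch, and collecting the common $2\sigma^2 N$ prefactor yields (5) verbatim.
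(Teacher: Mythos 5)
Your proposal is correct and follows essentially the same route as the paper's proof: the same void-probability/nearest-distance decomposition, the same quadratic substitution turning the BPP radial integral into $\gamma\left(N,-\mathcal{D}\left(h\right)\right)/\left(-\mathcal{D}\left(h\right)\right)^N$, and the same $I_0$ series interchange and exponential moments for the two fading branches. The only cosmetic difference is that you obtain the density of the fading ratio directly via a Jacobian change of variables, whereas the paper first computes the CDF of $\left|h_s\right|^2/\left|h_b\right|^2$ and then differentiates; the resulting integrand is identical.
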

\begin{proof}
For servicing the test user, the network selection criteria is the radio signal strength, therefore the probability that the test user is associated to the satellite can be written as
\begin{align}\label{eq6}
{P}_s &= \Pr \left\{{\mathcal{P}}_s\left| h_s\right|^2 R_s^{-\eta} \geq \mathcal{P}_b\left| h_b\right|^2 R_b^{-\eta} \right\} \nonumber \\
&= \Pr \left\{\left(\frac{R_s}{R_b}\right)^{\eta} \leq \left(\frac{{\mathcal{P}}_s}{\mathcal{P}_b} \cdot \frac{\left| h_s\right|^2}{\left| h_b\right|^2} \right) \right\} \nonumber \\
&= \int_{h} {F_{R_{s/b}}}\left(\frac{{\mathcal{P}}_s}{\mathcal{P}_b}h\right) {f_{\left| h_{s/b}\right|^2}}\left(h;t\right) \dd h ,
\end{align}
where $R_s$ and $R_b$ respectively represent the user's distance from the nearest satellite and the nearest BS. For simplicity, we define two variables $R_{s/b} \triangleq \left(\frac{R_s}{R_b}\right)^{\eta}$ and $\left|h_{s/b}\right|^2 \triangleq \frac{\left|h_s\right|^2}{\left|h_b\right|^2}$. The CDF of $ R_{s/b}$ is obtained as
\begin{align}\label{eq7}
{F_{R_{s/b}}}\left(r\right) &= \Pr \left\{\left(\frac{R_s}{R_b}\right)^{\eta} \leq r \right\} \nonumber \\
&= 1-\int_{{r}_s}^{2{r}_e+{r}_s} {F_{R_{b}}}\left(r^{-\frac{1}{\eta}}r_0\right) {f_{R_s}}\left(r_0\right) \dd r_0 \nonumber \\
&= Ne^{-\left(\pi B r^{-\frac{2}{\eta}}{r}_s^2+{\mathcal{D}}\left(\frac{{\mathcal{P}}_b}{{\mathcal{P}}_s}r\right)\right)} \int_{0}^{1} x^{N-1}e^{{\mathcal{D}}\left(\frac{{\mathcal{P}}_b}{{\mathcal{P}}_s}r\right)x} \dd x \nonumber \\
&= \frac{Ne^{-\left(\pi B r^{-\frac{2}{\eta}}{r}_s^2+{\mathcal{D}}\left(\frac{{\mathcal{P}}_b}{{\mathcal{P}}_s}r\right)\right)}{\gamma}\left(N,-{\mathcal{D}}\left(\frac{{\mathcal{P}}_b}{{\mathcal{P}}_s}r\right)\right)}{\left(-{\mathcal{D}}\left(\frac{{\mathcal{P}}_b}{{\mathcal{P}}_s}r\right) \right)^N} ,
\end{align}
where we used \eqref{eq2}, \eqref{eq4} and the substitution $x = 1-\frac{r_0^2-{r}_s^2}{4{r}_e \left({r}_e+{r}_s\right)}$ in the third equality, and the definition of the incomplete gamma function in the fourth equality. The CDF of $\left|h_{s/b}\right|^2$ is
\begin{align}\label{eq8} 
{F_{\left| h_{s/b}\right|^2}}\left(h;t\right) &= \Pr \left\{\frac{\left| h_{s}\right|^2}{\left| h_{b}\right|^2} \leq h \right\} \nonumber \\
&=1-\int_{0}^{+\infty} {F_{\left| h_{b}\right|^2}}\left(\frac{h'}{h}\right) {f_{\left| h_{s}\right|^2}}\left(h';t\right) \dd h' \nonumber \\
&= \frac{10{P_f}\left(t\right)}{\sqrt{2\pi}{\varsigma}\left(t\right){\ln}\left(10\right)} \int_{0}^{+\infty}\frac{2\sigma^2 he^{-\frac{\left(10\log_{10}\left(h_0\right)-{\mu}\left(t\right)\right)^2}{2{\varsigma^2}\left(t\right)}}}{h_0\left(h_0+2\sigma^2 h\right)} \dd h_0 \nonumber \\
&~~~ +\left(1-{P_f}\left(t\right)\right) e^{-{K}\left(t\right)} \sum_{z=1}^{+\infty} \frac{{K^{2z-1}}\left(t\right)}{\left(z-1\right)!}\left(\frac{2\sigma^2 h}{2\sigma^2 h{K}\left(t\right)+1}\right)^z ,
\end{align}
where we utilized \eqref{eq1}, \eqref{eq3} and the series expansion of $I_0$ in the third equality. By differentiating \eqref{eq8}, we have
\begin{align}\label{eq9}
{f_{\left| h_{s/b}\right|^2}}\left(h;t\right) = \frac{\dd}{\dd h} {F_{\left| h_{s/b}\right|^2}}\left(h;t\right) = 2\sigma^2\Biggl(\frac{10{P_f}\left(t\right)}{\sqrt{2\pi}{\varsigma}\left(t\right){\ln}\left(10\right)}\int_{0}^{+\infty}\frac{e^{-\frac{\left(10\log_{10}\left(h_0\right)-{\mu}\left(t\right)\right)^2}{2{\varsigma^2}\left(t\right)}}}{\left(h_0+2\sigma^2 h\right)^2} \dd h_0 \nonumber \\
+\left(1-{P_f}\left(t\right)\right)  e^{-{K}\left(t\right)} \sum_{z=1}^{+\infty} \frac{z {K^{2z-1}}\left(t\right) \left(2\sigma^2 h\right)^{z-1}}{\left(z-1\right) ! \left(2\sigma^2 h{K}\left(t\right)+1\right)^{z+1}}\Biggr) .
\end{align}
The proof is complete with replacing \eqref{eq7} and \eqref{eq9} in \eqref{eq6}.
\end{proof}
\begin{cor}\label{cor1} 
${P}_s$ is increasing function of $N$.
\end{cor}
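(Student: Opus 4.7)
The plan is to use the integral representation of $P_s$ from the third line of \eqref{eq6}, namely $P_s = \int_0^{+\infty} F_{R_{s/b}}((\mathcal{P}_s/\mathcal{P}_b)h)\, f_{|h_{s/b}|^2}(h;t)\, \dd h$. The integrand density $f_{|h_{s/b}|^2}(h;t)$ is nonnegative and does not depend on $N$, so once I show that $F_{R_{s/b}}(r)$ is nondecreasing in $N$ for every fixed $r > 0$, monotonicity of the integral immediately yields that $P_s$ is nondecreasing in $N$.

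For the monotonicity of $F_{R_{s/b}}$ in $N$, I would argue probabilistically rather than wrestle with the incomplete-gamma closed form in \eqref{eq7}. Since $F_{R_{s/b}}(r) = \Pr\{(R_s/R_b)^\eta \leq r\}$ and $R_b$ is independent of $N$, it suffices to show that $R_s$ is stochastically decreasing in $N$, i.e.\ $F_{R_s}(r)$ is nondecreasing in $N$ for each fixed $r$; intuitively, adding another LEO satellite to the BPP can only move the nearest one closer. To make this rigorous I would integrate the density \eqref{eq2} using the same substitution $x = 1 - (r_0^2 - r_s^2)/(4 r_e (r_e + r_s))$ that already appears in \eqref{eq7}, obtaining
\begin{equation*}
F_{R_s}(r) = 1 - \left(1 - \frac{r^2 - r_s^2}{4 r_e (r_e + r_s)}\right)^{N}, \quad r_s \leq r \leq 2r_e + r_s,
\end{equation*}
with the obvious boundary values outside this interval. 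The base of the $N$-th power lies in $[0,1]$, so raising it to a larger integer exponent only decreases the subtracted term; hence $F_{R_s}$ is nondecreasing in $N$, as required.

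I do not expect any genuine obstacle: the closed form for $F_{R_s}$ follows from an elementary substitution, and the monotonicity of $[0,1]$-valued bases under integer exponentiation is immediate. An alternative but less transparent route would be to differentiate the expression for $F_{R_{s/b}}$ in \eqref{eq7} directly in $N$ (treated as continuous), but this requires handling the derivative of $\gamma(N,\cdot)$ in its first argument and seems unnecessarily heavy compared to the stochastic-ordering argument above.
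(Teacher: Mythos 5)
Your proof is correct, but it takes a genuinely different route from the paper. The paper works directly with the closed form \eqref{eq5}: it isolates the $N$-dependent factor ${\mathcal{G}}\left(N;h\right) = N {\gamma}\left(N,-{\mathcal{D}}\left(h\right)\right)/\left(-{\mathcal{D}}\left(h\right)\right)^N$, rewrites it via a continued-fraction representation of the incomplete gamma function as $e^{{\mathcal{D}}\left(h\right)}/\left(1+{\mathcal{D}}\left(h\right)/{\mathcal{W}}\left(N;h\right)\right)$, and argues that ${\mathcal{W}}\left(N;h\right)$ grows with $N$ by inspecting the leading terms of the continued fraction. You instead stay at the probabilistic level of \eqref{eq6}: since $f_{\left|h_{s/b}\right|^2}$ is a nonnegative density independent of $N$, it suffices that $F_{R_{s/b}}$ be nondecreasing in $N$, which you reduce (conditioning on the $N$-independent $R_b$) to the stochastic ordering of $R_s$, i.e.\ to $F_{R_s}\left(r\right)=1-\left(1-\frac{r^2-r_s^2}{4r_e\left(r_e+r_s\right)}\right)^N$ being nondecreasing in $N$ because the base lies in $\left[0,1\right]$ — your substitution and resulting closed form check out against \eqref{eq2}. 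Your argument buys elementarity and, arguably, more rigor: the paper's "first term increases, third term decreases" reasoning about an \emph{infinite} continued fraction is heuristic, whereas your chain of pointwise monotone integrands is airtight; it also makes the mechanism transparent (more satellites stochastically shrink the nearest-satellite distance). What the paper's approach buys is that it operates on the final formula \eqref{eq5} itself, which is reused verbatim in Corollary \ref{cor2} to bound the growth order of ${\mathcal{G}}\left(N;h\right)$, something your stochastic-ordering argument does not directly provide. The only cosmetic gap is that you establish "nondecreasing" rather than "increasing"; strictness follows since the base $1-\frac{r^2-r_s^2}{4r_e\left(r_e+r_s\right)}$ lies strictly in $\left(0,1\right)$ on a set of positive measure.
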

\begin{proof}
Regarding \eqref{eq5}, the part of $P_s$ which is proportional to $N$ is denoted by ${\mathcal{G}}\left(N;h\right) = \frac{N {\gamma}\left(N,-{\mathcal{D}}\left(h\right)\right)}{\left(-{\mathcal{D}}\left(h\right)\right)^N}$, and rewritten as ${\mathcal{G}}\left(N;h\right) = \frac{e^{{\mathcal{D}}\left(h\right)}}{1+{\mathcal{D}}\left(h\right)/{\mathcal{W}}\left(N;h\right)}$ \cite{r14}, where
\begin{equation}\label{eq12}
{\mathcal{W}}\left(N;h\right) \triangleq N+1-\frac{{\mathcal{D}}\left(h\right)}{N+2+\frac{\left(N+1\right){\mathcal{D}}\left(h\right)}{N+3-\frac{2{\mathcal{D}}\left(h\right)}{N+4+\cdots}}} .
\end{equation}
The first term in ${\mathcal{W}}\left(N;h\right)$ increases with respect to $N$, while the third term decreases (${\mathcal{D}}\left(h\right) \geq 0$). Hence, ${\mathcal{W}}\left(N;h\right)$ and consequently ${\mathcal{G}}\left(N;h\right)$ are increasing functions of $N$.
\end{proof}
Regarding Corollary \ref{cor1}, we interpret that increasing number of LEO satellites reduces the terrestrial traffic, however it may cause that some LEO satellites remain idle. For design of an optimal satellite network in which no LEO satellite is idle, we exploit a condition based on the satellite-empty probability.
\begin{theo}\label{theo2} 
The satellite-empty probability of a generic LEO satellite for $N \geq c\left({r}_e+{r}_s\right)^2\mathcal{U}_s$ is approximately given by
\begin{equation}\label{eq13}
{f_{U}}\left(u=0\right) = 1-\frac{c\left({r}_e+{r}_s\right)^2\mathcal{U}_s}{N} ,
\end{equation}
where $c = 3.6$ is a constant for the two-dimensional Voronoi tessellation and $\mathcal{U}_s={P}_s\mathcal{U}$.
\end{theo}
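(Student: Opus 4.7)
The plan is to condition on the area of a generic satellite's Voronoi cell and then average over it, using three ingredients. First, by Theorem~\ref{theo1} each user is independently offloaded to the satellite tier with marginal probability $P_s$, so a standard thinning argument turns the ground user PPP into a PPP of satellite-offloaded users with intensity $\mathcal{U}_s = P_s \mathcal{U}$; a generic LEO satellite is idle precisely when no offloaded user lies in its Voronoi cell.

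Next, writing $A$ for the area of the Voronoi cell of a generic satellite on the sphere of radius $r_e+r_s$, the count $U$ of offloaded users in the cell is, given $A$, Poisson with mean $\mathcal{U}_s A$, so
\[
f_U(u=0) \;=\; E\!\left[e^{-\mathcal{U}_s A}\right].
\]
Finally, I would invoke the Ferenc--N\'eda gamma approximation for two-dimensional Voronoi tessellations, under which $A$ is modelled as a gamma variable with mean $\bar A = 4\pi(r_e+r_s)^2/N$ (total sphere area divided by $N$) and shape parameter $c \approx 3.6$. Evaluating the Laplace transform of this gamma density at $\mathcal{U}_s$ gives a closed form of the type $(1 + \mathcal{U}_s \bar A / c)^{-c}$, and a first-order expansion for large $N$ produces the advertised linear formula; the hypothesis $N \geq c(r_e+r_s)^2 \mathcal{U}_s$ marks precisely the regime in which this linear truncation lies in $[0,1]$ and can be interpreted as a probability.

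The main obstacle is this third step. The Ferenc--N\'eda fit is stated for a planar Poisson-Voronoi tessellation, whereas here the generators form a BPP on a sphere, so one has to justify the transfer by noting that for large $N$ the BPP is locally Poisson and the curvature of the sphere only enters through the global normalisation $\bar A$. A secondary subtlety is bookkeeping the first-order expansion so that the gamma shape parameter $c$, rather than a purely geometric prefactor such as $4\pi$, emerges as the coefficient in the final answer; tracking which cell-area normalisation is used is what ultimately pins down the constant in the stated approximation.
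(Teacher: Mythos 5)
Your overall architecture matches the paper's: thin the user PPP to intensity $\mathcal{U}_s = P_s\mathcal{U}$, write the idle probability as $f_U(u=0) = E\bigl[e^{-\mathcal{U}_s A}\bigr]$ by conditioning on the cell area $A$, and expand to first order in $1/N$. The gap is in your third step, and it is not the ``secondary subtlety'' of bookkeeping you describe --- it is fatal to recovering the stated constant. For \emph{any} choice of area distribution, the first-order expansion of $E\bigl[e^{-\mathcal{U}_s A}\bigr]$ is $1 - \mathcal{U}_s E[A]$: the linear coefficient is pinned to the mean of $A$, and the shape parameter of a gamma fit can never migrate into it (for a gamma with shape $k$ and mean $m$ the Laplace transform is $(1+sm/k)^{-k} = 1 - sm + O(s^2)$, with $k$ appearing only at second order). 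With your normalisation $\bar A = 4\pi(r_e+r_s)^2/N$ you therefore obtain $1 - 4\pi(r_e+r_s)^2\mathcal{U}_s/N$, and no amount of care about the Ferenc--N\'eda shape parameter will turn the $4\pi \approx 12.57$ into $c = 3.6$.

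The paper gets $c$ into the linear term because it uses a different area law altogether: the BPP cell-area approximation of \cite{r15},\cite{r16}, namely $f_A(a)\approx \beta\bigl(a/(r_e+r_s)^2;\, c+\tfrac{1}{2},\, N-c\bigr)\big/\bigl((r_e+r_s)\sqrt{a}\bigr)$ supported on $\bigl[0,(r_e+r_s)^2\bigr]$, which after simplification makes $A/(r_e+r_s)^2$ a $\mathrm{Beta}(c,N-c)$ variable with $E[A] = c(r_e+r_s)^2/N$ and $E[A^2] = c(c+1)(r_e+r_s)^4/\bigl(N(N+1)\bigr)$. Substituting the Maclaurin series of $e^{-\mathcal{U}_s a}$ and integrating term by term against this density gives exactly $1 - c(r_e+r_s)^2\mathcal{U}_s/N + c(c+1)(r_e+r_s)^4\mathcal{U}_s^2/\bigl(2N(N+1)\bigr) - \cdots$, and the hypothesis $N \geq c(r_e+r_s)^2\mathcal{U}_s$ is what the paper invokes to discard the quadratic and later terms. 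To complete your argument you must replace the gamma-with-sphere-area-mean ansatz by this beta approximation (equivalently, accept that the cell-area model being used has mean $c(r_e+r_s)^2/N$ rather than total sphere area over $N$); everything else in your outline then goes through as in the paper.
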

\begin{proof} 
The area distribution of a satellite region $A$ is given by ${f_{A}} \left(a\right) \approx \frac{{\beta} \left(a/\left({r}_e+{r}_s\right)^2;c+1/2,N-c \right)}{\left({r}_e+{r}_s\right)\sqrt{a}}$, $0 \leq a \leq \left({r}_e+{r}_s\right)^2$ for BPP model, where $\beta\left(x;\varepsilon,\zeta\right) \triangleq \frac{\Gamma\left(\varepsilon+\zeta\right)x^{\varepsilon-1}\left(1-x\right)^{\zeta-1}}{\left(\Gamma\left(\varepsilon\right)\Gamma\left(\zeta\right)\right)}$ and $\Gamma\left(\cdot\right)$ is the gamma function \cite{r15,r16}. Since the users associated to the satellite are distributed as PPP with the intensity $\mathcal{U}_s$, we have $\Pr\left\{U=u|A=a\right\} = \frac{\left(\mathcal{U}_s a\right)^ue^{-\mathcal{U}_s a}}{u!}$ for given $A=a$, so the probability of no user being in $A$ is obtained as
\begin{align}\label{eq15}
{f_{U}}\left(u=0\right) &= \int_{0}^{\left({r}_e+{r}_s\right)^2} \Pr \left\{U=0 | A=a \right\} {f_{A}} \left(a\right) \dd a \nonumber \\
&= \frac{{\Gamma}\left(N\right)}{{\Gamma}\left(c\right){\Gamma}\left(N-c\right)\left({r}_e+{r}_s\right)^2} \int_{0}^{\left({r}_e+{r}_s\right)^2}\left(\frac{a}{\left({r}_e+{r}_s\right)^2}\right)^{c-1} \left(1-\frac{a}{\left({r}_e+{r}_s\right)^2}\right)^{N-c-1} e^{-\mathcal{U}_s a} \dd a \nonumber \\
&= 1-\frac{\left({r}_e+{r}_s\right)^2\mathcal{U}_s c}{N}+\frac{\left({r}_e+{r}_s\right)^4\mathcal{U}_s^2 c\left(c+1\right)}{2N\left(N+1\right)}-\cdots ,
\end{align}
where we utilized Maclaurin series of $e^{-\mathcal{U}_s a}$. For $ N \geq c\left({r}_e+{r}_s\right)^2\mathcal{U}_s$, the third term and later are ignored.
\end{proof}
According to the above description, for design of an optimal satellite network, we can obtain ${N}^{\opt}$ by solving
\begin{align}\label{eq16} 
& \underset{N}{\maxi}
~~~{P}_s \nonumber \\
& \text{subject to} ~~~ {f_{U}}\left(u=0\right) \leq \epsilon ,
\end{align}
where $\epsilon$ is a threshold greater than (but close to) zero. To solve \eqref{eq16}, we present Corollary \ref{cor2} in the following.
\begin{cor}\label{cor2} 
The numerator order of the fraction in \eqref{eq13} is lower than linear with respect to $N$.
\end{cor}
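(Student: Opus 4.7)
The plan is to isolate the only $N$-dependent factor in the numerator $c(r_e+r_s)^2 \mathcal{U}_s$ of \eqref{eq13} and argue that it grows strictly slower than $N$. Since $c$, $r_e$, $r_s$, and the user intensity $\mathcal{U}$ are all independent of the satellite count, the only quantity that can scale with $N$ is $\mathcal{U}_s = P_s \mathcal{U}$, so the corollary reduces to a statement about the $N$-dependence of the traffic offloading probability $P_s$ itself.

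First I would invoke the trivial observation that $P_s$ is a probability, hence $0 \leq P_s \leq 1$. This alone implies that the numerator is bounded above by the $N$-free constant $c(r_e+r_s)^2 \mathcal{U}$, which is $O(1)$ and therefore strictly lower in order than $N$. At this point the statement of the corollary already follows; everything after is refinement aimed at confirming that the bound is not vacuous.

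To connect the argument with Corollary \ref{cor1}, I would also examine the representation $\mathcal{G}(N;h) = e^{\mathcal{D}(h)}/(1+\mathcal{D}(h)/\mathcal{W}(N;h))$ established in that proof. Because $\mathcal{W}(N;h) \to \infty$ as $N$ grows, the integrand of \eqref{eq5} converges pointwise to a finite $N$-independent function, and a dominated-convergence argument would show that $P_s$ tends to a finite limit. The main obstacle I anticipate is producing a summable majorant that is uniform in $N$ for the infinite $z$-series in the bracketed part of \eqref{eq5} so as to justify the interchange of limit and integration rigorously; the simple probabilistic bound, however, already suffices for the claim.

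The practical consequence, which I would state at the end, is that because the numerator is sublinear while the denominator $N$ is linear, the feasibility condition $N \geq c(r_e+r_s)^2 \mathcal{U}_s$ of Theorem \ref{theo2} and hence the constraint $f_U(u=0) \leq \epsilon$ in \eqref{eq16} can always be met by choosing $N$ sufficiently large, so that the optimization problem in \eqref{eq16} is well posed and admits a finite ${N}^{\opt}$.
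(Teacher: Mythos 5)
Your proof is correct, but it reaches the conclusion by a genuinely different and more elementary route than the paper. The paper argues structurally: it notes that the only $N$-dependence of $\mathcal{U}_s = P_s\mathcal{U}$ enters through $\mathcal{G}(N;h) = e^{\mathcal{D}(h)}/\bigl(1+\mathcal{D}(h)/\mathcal{W}(N;h)\bigr)$ and that, because $\mathcal{W}(N;h)$ sits in both the numerator and denominator of this expression, $\mathcal{G}$ grows sublinearly in $N$. Your first observation --- that $P_s$ is a probability, hence $P_s \le 1$, so the numerator of \eqref{eq13} is bounded by the $N$-free constant $c(r_e+r_s)^2\mathcal{U}$ --- is immediate, rigorous, and actually stronger: it shows the numerator is $O(1)$, not merely $o(N)$, and it sidesteps entirely the continued-fraction machinery and the dominated-convergence concerns you raise (which you correctly identify as unnecessary for the claim as stated). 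What the paper's route buys in exchange is continuity with the proof of Corollary~\ref{cor1}, where the same $\mathcal{G}/\mathcal{W}$ decomposition is used to show $P_s$ is \emph{increasing} in $N$; read together, the two corollaries are meant to support the later assertion that $f_U(u=0)$ is increasing in $N$ so that the constraint in \eqref{eq16} is active at the optimum. Note that neither your $O(1)$ bound nor the paper's order argument by itself delivers that monotonicity of $\mathcal{U}_s/N$ (a sublinear numerator does not preclude the ratio from oscillating), but that is a gap in the paper's chain of reasoning rather than in your proof of the stated corollary. Your closing remark about feasibility of $N \ge c(r_e+r_s)^2\mathcal{U}_s$ for large $N$ is a sound and useful addition that follows directly from your bound.
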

\begin{proof}
In the numerator of the fraction, $\mathcal{U}_s$ is proportional to $N$ as ${\mathcal{G}}\left(N;h\right)$. Since there is ${\mathcal{W}}\left(N;h\right)$ in both the numerator and the denominator of ${\mathcal{G}}\left(N;h\right)$, so ${\mathcal{G}}\left(N;h\right)$ increases with the order lower than one (linear) as $N$ increases.
\end{proof}
Based on Corollaries \ref{cor1} and \ref{cor2}, both ${P}_s$ and ${f_{U}}\left(u=0\right)$ are increasing functions of $N$. Hence, ${P}_s^{\opt}$ (the maximum of ${P}_s$) in \eqref{eq16} is obtained for ${f_{U}}\left(u=0\right)=\epsilon$. We first assume that $N$ is a real variable (not integer) and after solving \eqref{eq16}, if the found ${N}^{\opt}$ is not integer, we round it to the nearest integer greater than that value, until we are sure to reach ${P}_s^{\opt}$.
\begin{figure}
\centering
\includegraphics[width=0.45\columnwidth]{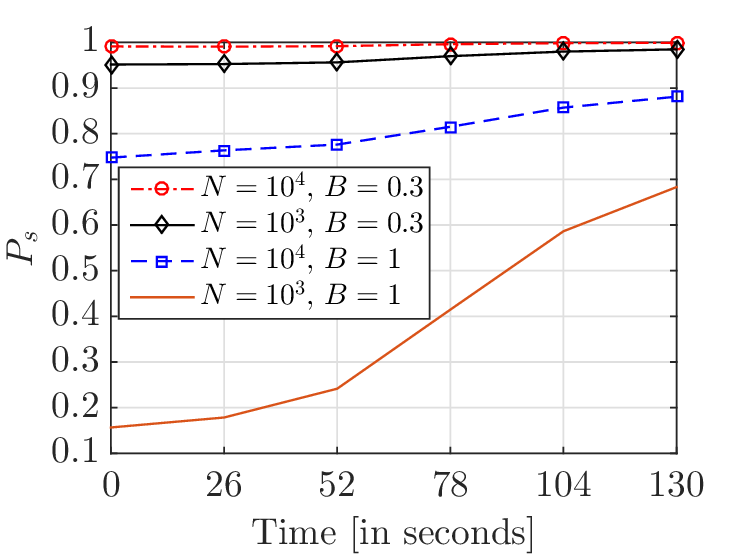}
\caption{Offloading probability vs. the time for ${\mathcal{P}}_s=8~\W$ and ${r}_s=500~\km$.}
\label{fig1}
\end{figure}

Traffic congestion of the terrestrial network can be controlled by solving \eqref{eq16} and obtaining ${P}_s^{\opt}$, such that the ratio ${P}_s^{\opt}$ of the traffic (or the intensity of the users $\mathcal{U}$) is offloaded by the satellite network, i. e., ${\mathcal{U}}_s={P}_s^{\opt}\mathcal{U}$.
\begin{rem}\label{rem1}
Since $B$ and $\mathcal{U}$ are not uniform all over the Earth, for a region with the given $B$ and $\mathcal{U}$, we locally set the number of LEO satellites (proportional to the area of that region), such that the density of the satellite network becomes $\frac{{N}^{\opt}}{4\pi\left({r}_e+{r}_s\right)^2}$.
\end{rem}

\section{Numerical Results}
We numerically evaluate traffic offloading probability and study how it depends on the network parameters. We set ${r}_e=6378~\km$, $\mathcal{P}_b=1~\W$, $\sigma=4.47 \times 10^{-7}$, and $\eta=3$.
\begin{table}[!t]
\begin{center}
\caption{Satellite channel parameters for ${r}_s=500~\km$ \cite{r13}.}
\label{TT}
\begin{tabular}{||c|c|c|c|c||}
\hline
Time, $t$ [s] & ${P_f}$ & ${K}$ & ${\mu}$ [dB] & ${\varsigma}$ [dB] \\ \hline \hline
$0$ & $0.82$ & $3.1$ & $-16$ & $5$ \\ \hline
$26$ & $0.79$ & $3.2$ & $-14$ & $5.5$ \\ \hline
$52$ & $0.69$ & $3.7$ & $-9$ & $4.7$ \\ \hline
$78$ & $0.51$ & $5$ & $-8.6$ & $3.1$ \\ \hline
$104$ & $0.35$ & $6.2$ & $-6.1$ & $1.2$ \\ \hline
$130$ & $0.27$ & $7.3$ & $-3.5$ & $0.2$ \\ \hline
\end{tabular}
\end{center}
\end{table}

In Fig. \ref{fig1}, ${P}_s$ is depicted versus the time interval that the elevation angle of the satellite at ${r}_s=500~\km$ changes from $10^{\circ}$ (corresponding $t=0$) to $60^{\circ}$ (corresponding $t=130~\s$). The parameters of the satellite channel change through the time according to the values presented in Table \ref{TT} \cite{r13}. The curves are plotted for ${\mathcal{P}}_s=8~\W$. From this figure, $P_s$ increases by increasing the satellite elevation angle, because the satellite channel gradually improves from the bad state to the good state. We also obtain $P_s \simeq 1$ even with fewer satellites $N=10^3$ when the intensity of the BSs is low (e. g., $B=0.3~\BSkm$). This means that we can overcome the limited coverage for the low intensity of the BSs, by choosing an appropriate number of satellites. Moreover, increasing $N$ from $10^3$ to $10^4$ can compensate the low probability of the offloading for the high intensity of the BSs $B=1~\BSkm$.

Fig. \ref{fig2} illustrates the impact of the altitude and the radiant power of the satellites on the design of the optimal satellite network. We plot the optimal offloading probability versus the optimal number of LEO satellites for ${r}_s\in\left[600,1000\right]~\km$, $B= 0.5~\BSkm$ and an area $750~\kmkm$. The satellite channel parameters are set to ${P_f}=0.27$, ${K}=7.3$, ${\mu}=-3.5~\dB$, and ${\varsigma}=0.2~\dB$. Comparison between the points P1 and P2 in Fig. \ref{fig2} indicates that ${P}_s^{\opt}$ increases by reducing ${r}_s$ from $900~\km$ to $600~\km$. Also, comparing P3 with P5 shows that ${P}_s^{\opt}$ increases by increasing ${\mathcal{P}}_s$ from $3~\W$ to $5~\W$. Moreover, comparison between P5 and P6 demonstrates that the number of LEO satellites being idle decreases by reducing $\epsilon$ from $0.1$ to $0.01$. Finally to show how to relieve the traffic congestion of the terrestrial network, we display P1 and P4 in Fig. \ref{fig2} where $P_s^{\opt} \approx 0.53$. Since $P_s^{\opt}$ is close to $0.5$, half of the traffic in the terrestrial network is offloaded by the satellite network. While $\mathcal{U}$ increases from $1~\Userkm$ (P4) to $3~\Userkm$ (P1), $P_s^{\opt}$ is almost constant, so increasing the traffic can be controlled.
\begin{figure}
\centering
\includegraphics[width=0.50\textwidth]{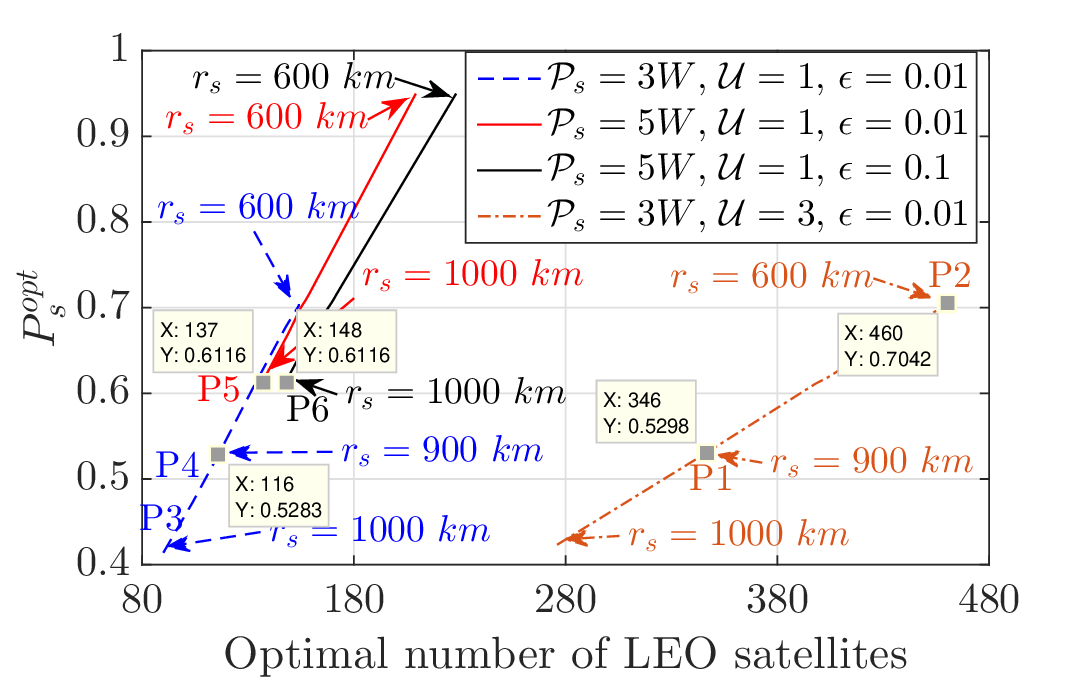}
\caption{Optimal offloading probability vs. optimal number of LEO satellites.}
\label{fig2}
\end{figure}

\section{Conclusion}
In this paper, we investigated traffic offloading problem in integrated LEO satellite-terrestrial network and derived the traffic offloading probability from the terrestrial network to the satellite network based on the instantaneous radio signal strength. We designed the optimal satellite network in which the traffic offloading probability is maximized to control the terrestrial traffic, while the number of LEO satellites being idle is close to zero. Our analyses show that a minimum number of LEO satellites overcomes the limited coverage in a region with low intensity of the BSs and also relieves the traffic congestion in a region that the intensity of the users increases.

\end{document}